\newcommand{\partT}[1]{\partial_T #1}
\newcommand{\partX}[1]{\partial_X #1}
\newcommand{\partY}[1]{\partial_Y #1}
\newcommand{\partXX}[1]{\partial_X^2 #1}
\newcommand{\partYY}[1]{\partial_Y^2 #1}
\newcommand{\partXXYY}[1]{\partial_X^2\partial_Y^2 #1}
\newcommand{\partXXXX}[1]{\partial_X^4 #1}
\newcommand{\partYYYY}[1]{\partial_Y^4 #1}
\newcommand{\partt}[1]{\partial_t #1}
\newcommand{\partx}[1]{\partial_x #1}
\newcommand{\party}[1]{\partial_y #1}
\newcommand{\partxx}[1]{\partial_x^2 #1}
\newcommand{\partyy}[1]{\partial_y^2 #1}
\newcommand{\partxxyy}[1]{\partial_x^2\partial_y^2 #1}
\newcommand{\partxxxx}[1]{\partial_x^4 #1}
\newcommand{\partyyyy}[1]{\partial_y^4 #1}
\newcommand{\blockcomment}[1]{}
\newtheorem{theorem}{Theorem}
\theoremstyle{definition}
\newtheorem{defn}{Definition}
\begin{document}

\begin{center}
{\bfseries \Large Phase-Diffusion Equations for the Anisotropic Complex Ginzburg-Landau Equation} \\
\end{center}

\vspace{0.2ex}

\begin{center}
{\large Derek Handwerk, Gerhard Dangelmayr, Iuliana Oprea, and Patrick D. Shipman }\vspace{0.2ex}
Department of Mathematics, Colorado State University, Ft. Collins, CO 80523-1874
\end{center}


\vspace{2ex}

\centerline{\today}

\vspace{2ex}

\begin{abstract}
\noindent  The anisotropic complex Ginzburg-Landau equation (ACGLE) describes slow modulations of patterns in anisotropic spatially extended systems near oscillatory (Hopf) instabilities with zero wavenumbers.  Traveling wave solutions to the ACGLE become unstable near Benjamin-Feir-Newell instabilities.  We determine two instability conditions in parameter space and study codimension-one (-two) bifurcations that occur if  one (two) of the conditions is (are) met.  We derive anisotropic Kuramoto-Sivashinsky-type equations that govern the phase of the complex solutions to the ACGLE  and generate solutions to the ACGLE from solutions of the phase equations.  
 \end{abstract}

\vspace{2ex}
\noindent
\begin{mbox}
{\bf Key words:} Anisotropy, Ginzburg-Landau equation, phase equation, Kuramoto-Sivashinsky equation, Benjamin-Feir-Newell instability.
\end{mbox}
\vspace{2ex}


\hrule
\vspace{1.1ex}

\section{Introduction}\label{sec:intro}  

Complex spatiotemporal patterns such as spatiotemporal chaos and defects, manifest themselves in spatially extended systems driven far from equilibrium. The problem of finding a general framework for the characterization of such disordered states, as well as the identification of instability mechanisms generating them, remain active areas of investigation in nonlinear dynamics.  While complex spatiotemporal patterns in nonequilibrium isotropic systems have been intensively studied, far less is known about complex spatiotemporal dynamics in anisotropic systems.  Yet, intriguing features of patterns in anisotropic media when driven out of equilibrium have been observed in a wide range of experimental studies.  Examples include electroconvection of nematic liquid crystals \cite{kramerpesch95}, surface nanopatterning by ion-beam erosion \cite{IBS14}, chemical waves in catalytic surface reactions \cite{Mak06,rose96}, epitaxial growth \cite{rost1995anisotropic}, sea ice melting \cite{SeaIce18}, and vegetation patterns \cite{Vegetation11}.  

These experimental observations and numerical simulations demonstrate that anisotropy can lead to novel mechanisms and phenomena that are manifested only in anisotropic media. For instance, Rayleigh-B\'enard convection \cite{Bodenschatz2000,Egolf2000}, a prototype of spatiotemporal chaos in isotropic fluids, displays a paradigm of spatiotemporal chaos known as spiral defect chaos, while in anisotropic systems such as nematic electroconvection other mechanisms intervene, leading to new patterns such as the zigzag spatiotemporal chaos \cite{Dennin96}.  Similar patterns can arise from even this wide range underlying physical systems.  Analysis of mathematical models for these diverse systems results in a universal characterization of the similar patterns by their description using \textit{amplitude} and \textit{phase} equations.  

\textit{Amplitude equations} describe the slow modulation of the pattern near the threshold of instability. They can be derived via symmetry arguments or through multiple-scales analysis. The same amplitude equation can be derived from different underlying equations up to some unknown coefficients that determine the length and time scales, and the effect of the nonlinearity. The underlying equations can be used to determine the correct coefficients of the amplitude equation or they can be scaled out completely. The ability to completely remove the system-dependent coefficients from the equation demonstrates the universality of amplitude equations. 

The slow modulations of a complex amplitude satisfying an amplitude equation are generally governed by \emph{phase equations} which describe the extremely slow variation of the phase of the amplitude. When a system has more than one extended spatial direction, it and thus also the amplitude and phase equations, can be either isotropic or anisotropic.  Nematic liquid crystals \cite{dangelmayroprea2004,dangelmayr2008modulational,kramerpesch95}, ion bombardment \cite{harrisonbradley16}, and surface erosion and growth \cite{rost1995anisotropic} are examples of physical scenarios that can be described by anisotropic amplitude and phase equations. 

The amplitude equation of concern in this paper is the two-dimensional, anisotropic complex Ginzburg-Landau equation (ACGLE),  
\begin{equation}
  \partt{A} = \mu A + (1 + i\alpha_1)\partxx{A} + (1+i\alpha_2)\partyy{A} - (1+i\beta)|A|^2A,
  \label{eq:ACGLE}
\end{equation}
where $A$ is the complex amplitude and $\alpha_1,\alpha_2,\beta,\mu\in\mathbb{R}$ with $\mu>0$. For \(\alpha_1 = \alpha_2\), this equation is the isotropic complex Ginzburg-Landau equation (CGLE). We refer to (\ref{eq:ACGLE}) as the 1D CGLE if the $y$-dependent term is absent and one looks only for solutions $A(x,t)$.  

Since the spatially independent part of the complex Ginzburg Landau equation coincides with the normal form for a supercritical Hopf bifurcation \cite{GuckenheimerHolmes83}, both the 1D and 2D isotropic complex Ginzburg Landau equation and variants thereof, referred to as $\lambda-\omega$ systems as incepted by Kopell and Howard \cite{KopellHoward73}, have been studied as spatiotemporal model equations showing plane wave and front solutions in various settings including reaction diffusion systems \cite{IpsenEtAl97,Kuramoto1984,KuramotoTsuzuki75,ShenEtAl18} as well as predator-prey systems \cite{BennetSherrat19,SherrattEtAl09}. Extending these studies for the ACGLE (\ref{eq:ACGLE}) should provide new insights into the effect of ansiotropies on this kind of spatiotemporal dynamics. 

There has been much less research into the ACGLE compared to the isotropic CGLE. Some aspects of phase chaos were investigated in \cite{faller1998phase}, and new chevron-like, ordered defect solutions were reported in \cite{faller1998chevrons}. A study of a perturbed amplitude in the longwave case was performed in \cite{brown1993evolution}.  The authors of this paper used a perturbed ansatz of the form \(A(x,y,t) = \sqrt{\mu}(1+r(x,y,t))e^{-i\beta\mu t}\) that lacks any phase perturbation as considered by us (see Equation (\ref{eq:aCGL_ansatz_longwave})), which is crucial for the reduction of (\ref{eq:ACGLE}) to a phase equation.  A comprehensive analysis of long-wave and short-wave instabilities of traveling wave solutions of the ACGLE was performed in \cite{dangelmayr2008modulational}. This analysis was done in the context of modulational instabilities of traveling waves determined as solutions of systems of two or four globally coupled complex Ginzburg-Landau equations, which  are the amplitude equations corresponding to oscillatory instabilities with nonzero critical wavenumbers of a basic state of an anisotropic system \cite{schneider2007}. 

Due to the assumptions used in the derivation of phase equations from amplitude equations, it is important to consider the regions and circumstances in which they are valid. In the case of the 1D CGLE, the validity of phase equations has been established in \cite{haas2018modulation,melbourne2004phase,van2004phase} for different parameter regimes. In the parameter regime where a 1D, periodic traveling wave solution \(A(x,t) = A_qe^{i(qx + \omega_qt)}\) is stable to perturbations (Eckhaus stable) the validity has been looked at in \cite{melbourne2004phase}, while near the Eckhaus instability the validity of a Korteweg de Vries equation was proved in \cite{haas2018modulation}. Near the so-called Benjamin-Feir-Newell instability, where all traveling wave solutions are unstable, a fourth-order diffusive equation called the (1D) Kuramoto-Sivashinsky (KS) equation \cite{kuramoto1976persistent,sivashinsky1977derivation} has been established \cite{van2004phase}.

A derivation and partial analysis of the 2D, isotropic extension of the 1D Kuramoto-Sivashinsky (KS) equation as the equation for the phase dynamics of the (2D) CGLE near the Benjamin-Feir-Newell instability is given in \cite{kuramoto1976persistent}. The form of this equation is
\begin{equation}
\partial_T\Phi=-(\partial_X^2\Phi+\partial_Y^2\Phi)-(\partial_X^4\Phi+2\partial_X^2\partial_Y^2\Phi+ \partial_Y^4\Phi)+\frac{1}{2}\big((\partial_X\Phi)^2+(\partial_Y\Phi)^2\big),\label{eq:KSeq_Intro}
\end{equation}
where $(X,Y,T)$ are slow variables. This equation has been studied in its own right as a model for phenomena such as flame fronts  and the leading edge of a viscous fluid flowing down an inclined plane as well as for its spatiotemporal chaotic behavior \cite{Kuramoto1984,kuramoto1976persistent,sivashinsky1977derivation}. If all $Y$-dependent terms are absent, it reduces to the 1D KS equation. 

In this paper we establish anisotropic versions of the KS equation as well as another 2D extension of the 1D KS equation as equations governing the phase dynamics of the ACGLE near different Benjamin-Feir-Newell-type instabilities. The validity of these equations is confirmed through numerical simulations. Beyond its relevance for the ACGLE, an anisotropic 2D KS equation has been been introduced as a model for surface sputter erosion and epitaxial growth \cite{rost1995anisotropic}. For an in-depth numerical study of (\ref{eq:KSeq_Intro}), we refer to \cite{kalogirou2015}.

\section{Linear Stability Analysis}
\label{sec:LSA}

\subsection{Traveling Plane-Wave Solutions and their Perturbations}
\label{subsec:TPWS}

In this section we study the linear stability of traveling plane-wave solutions (TPWS's) of the ACGLE (\ref{eq:ACGLE}). Note that the anisotropy presents itself in the linear dispersion terms.  A TPWS to the ACGLE in the \(\bm{k}\)-direction and with frequency \(\omega\) is given by 
\begin{subequations}\label{eq:ACGLE_travelingwave_solution} 
\begin{align} 
  A &= R_0 e^{i(\bm{k}\cdot\bm{x} - \omega t)}, \label{eq:ACGLE_travelingwave_A}\\
  R_0^2 &= \mu - (k_1^2 + k_2^2), \label{eq:ACGLE_travelingwave_R0}\\
  \omega &= \beta R_0^2 + \alpha_1k_1^2 + \alpha_2k_2^2,\label{eq:ACGLE_travelingwave_omega}
\end{align}
\end{subequations}
where \(\bm{k} = (k_1,k_2)\) is the wavenumber and \(\bm{x} = (x,y)\). The requirements (\ref{eq:ACGLE_travelingwave_R0}) and (\ref{eq:ACGLE_travelingwave_omega}) are found by substituting (\ref{eq:ACGLE_travelingwave_A}) into  (\ref{eq:ACGLE}).

To determine the stability of a TPWS (\ref{eq:ACGLE_travelingwave_solution}) we analyze the time-evolution of perturbations. Separating the perturbations $r(x,y,t)$ of the amplitude and $\phi(x,y,t)$ of the phase as
\begin{equation}\label{eq:ACGLE_travelingwave_perturbed}
  A(x,y,t) = R_0(1+r(x,y,t))e^{i[\bm{k}\cdot\bm{x}- \omega t + \phi(x,y,t)]},  
\end{equation}
substituting (\ref{eq:ACGLE_travelingwave_perturbed}) into (\ref{eq:ACGLE}), and then separating real and imaginary parts leads to the evolution equations
\begin{equation}\label{eq:expanded_amplitude}
  \begin{split}
    \partt{r} &= -2R_0^2r - 3R_0^2r^2 -R_0^2r^3 + \partxx{r} + \partyy{r} - (\partx{\phi})^2 - (\party{\phi})^2 - r(\partx{\phi})^2 - r(\party{\phi})^2 \\
    &\quad -2k_1\alpha_1\partx{r} - 2k_2\alpha_2\party{r} - 2k_1\partx{\phi} - 2k_2\party{\phi} - 2k_1r\partx{\phi} - 2k_2r\party{\phi} \\
    &\quad - \alpha_1\partxx{\phi} - r\alpha_1\partxx{\phi} - \alpha_2\partyy{\phi} - r\alpha_2\partyy{\phi} - 2\alpha_1\partx{r}\partx{\phi} - 2\alpha_2\party{r}\party{\phi},
  \end{split}
\end{equation}
and
\begin{equation}\label{eq:expanded_phase}
  \begin{split}
    \partt{\phi} &= -2R_0^2\beta r - R_0^2\beta r^2 - \alpha_1(\partx{\phi})^2 - \alpha_2(\party{\phi})^2 + \partxx{\phi} + \partyy{\phi} - 2k_1\alpha_1\partx{\phi} - 2k_2\alpha_2\party{\phi}\\
    &\quad + \frac{1}{1+r}\bigg[\alpha_1\partxx{r} + \alpha_2\partyy{r} + 2k_1\partx{r} + 2k_2\party{r} + 2\partx{r}\partx{\phi} + 2\party{r}\party{\phi}\bigg].
  \end{split}
\end{equation}
for the perturbations.
For details of the computation of (\ref{eq:expanded_amplitude}) and (\ref{eq:expanded_phase}) we refer to \cite{Handwerk2019}. 
 
Upon linearizing around the base state  \((r,\phi) = (0,0)\) of the unperturbed TPWS, (\ref{eq:expanded_amplitude}) and (\ref{eq:expanded_phase}) simplify to
\begin{equation}
  \partt{r} = -2R_0^2r + \partxx{r} + \partyy{r} - 2k_1\alpha_1\partx{r} - 2k_2\alpha_2\party{r} - 2k_1\partx{\phi} - 2k_2\partY{\phi} - \alpha_1\partXX{\phi} - \alpha_2\partYY{\phi},\label{eq:lineq_amplitude}
\end{equation}
\begin{equation}
  \partt{\phi} = -2R_0^2\beta r + \partxx{\phi} + \partyy{\phi} - 2k_1\alpha_1\partx{\phi} - 2k_2\alpha_2\party{\phi} + \alpha_1\partxx{r} + \alpha_2\partyy{r} + 2k_1\partx{r} + 2k_2\party{r}.\label{eq:lineq_phase}
\end{equation}
Substituting the modes \(r = \hat{r}e^{\sigma t + i\bm{q}\cdot \bm{x}}\) and \(\phi = \hat{\phi}e^{\sigma t + i\bm{q}\cdot \bm{x}}\) with wavevector $\bm{q}=(q_1,q_2)$ into (\ref{eq:lineq_amplitude}) and (\ref{eq:lineq_phase}) leads to an eigenvalue problem for $(\hat{r},\hat{\phi})$ with eigenvalue $\sigma$. The trace and determinant of the corresponding matrix ${\mathcal M}$ are
\begin{eqnarray}
\mbox{Tr} \; {\mathcal M}&=&-2(R_0^2+|\bm{q}|^2)-4i(\alpha_1k_1q_1+\alpha_2k_2q_2),\label{eq:TrM}\\
\mbox{det} \; {\mathcal M}&=&2q_1^2\big(R_0^2(1+\alpha_1\beta)-2(1+\alpha_1^2)k_1^2\big) +
2q_2^2\big(R_0^2(1+\alpha_2\beta)-2(1+\alpha_2^2)k_2^2\big)\nonumber\\
&&-\:8(1+\alpha_1\alpha_2)k_1k_2q_1q_2+|\bm{q}|^4+(\alpha_1q_1^2+\alpha_2q_2^2)^2\label{eq:detM}\\
&&+\:4iR_0^2\big((\alpha_1-\beta)k_1q_1+(\alpha_2-\beta)k_2q_2\big)+4i(\alpha_1-\alpha_2)q_1q_2(k_1q_2-k_2q_1).\nonumber
\end{eqnarray}

\subsection{Long-Wave Stability and Instability}
\label{subsec:LWS-I}

We are primarily interested in the stability of TPWS's against long-wavelength (LW) perturbations where $|\bm{q}|$ is arbitrarily small, $0<|\bm{q}|\ll 1$. Expanding the two roots of the characteristic equation $\sigma^2-\sigma\mbox{Tr}{\mathcal M}+\mbox{det}{\mathcal M}=0$ for small $(q_1,q_2)$ yields one eigenvalue $\sigma_s=-2R_0^2+{\mathcal O}(|\bm{q}|)$ with negative real part for sufficiently small $|\bm{q}|$, and another eigenvalue that determines the stability of the TPWS against LW perturbations,
\begin{equation}
\begin{split}\label{eval:2}
  \sigma(\bm{q};\bm{k}) &= 2ik_1(\beta-\alpha_1)q_1 + 2ik_2(\beta-\alpha_2)q_2 + \Big(\frac{2k_1^2}{R_0^2} + \frac{2\beta^2k_1^2}{R_0^2} - \alpha_1\beta - 1\Big)q_1^2 \\ &\quad + \Big(\frac{4k_1k_2}{R_0^2} + \frac{4\beta^2k_1k_2}{R_0^2}\Big)q_1q_2 + \Big(\frac{2k_2^2}{R_0^2} + \frac{2\beta^2k_2^2}{R_0^2} - \alpha_2\beta - 1\Big)q_2^2 + {\mathcal O}(|\bm{q}|^3).
  \end{split}
\end{equation}
The real part of this eigenvalue can be written as
\begin{equation}\label{eq:realpart}
\sigma_r(\bm{q};\bm{k})={\mathcal Q}(\bm{q};\bm{k})+{\mathcal O}(|\bm{q}|^4),
\end{equation}
where ${\mathcal Q}(\bm{q};\bm{k})$ is the following quadratic form with respect to $\bm{q}$, with $\bm{k}$ considered as a parameter;
\begin{equation}\label{eq:Q}
{\mathcal Q}(\bm{q};\bm{k})=D_{xx}(\bm{k})q_1^2 + 2D_{xy}(\bm{k})q_1q_2 + D_{yy}(\bm{k})q_2^2,
\end{equation}
with
\begin{equation}\label{eq:ACGLE_instability_condition_coeffs}
  D_{xx}=\frac{2k_1^2(1+\beta^2)}{R_0^2} - (\alpha_1\beta + 1),\;\;\;\;  
  D_{xy}=\frac{2k_1k_2(1+\beta^2)}{R_0^2}, \;\;\;\;
  D_{yy}=\frac{2k_2^2(1+\beta^2)}{R_0^2} - (\alpha_2\beta + 1). 
\end{equation}
%
These coefficients are in agreement with the linear stability analysis of the isotropic CGLE \cite{lega2001traveling}.
Note that for fixed $\bm{k}$, ${\mathcal Q}(\bm{q};\bm{k})$ is the quadratic form with respect to $\bm{q}$ that is associated with the symmetric matrix
\begin{equation}\label{eq:D-matrix}
{\mathcal D}(\bm{k})= \left(\begin{array}{ll}D_{xx}(\bm{k})&D_{xy}(\bm{k})\\D_{xy}(\bm{k})&D_{yy}(\bm{k})\end{array}\right).
\end{equation}

We introduce the following notions of LW-stability and -instability.
\begin{defn}
\label{def1}
A TPWS with wavenumber $\bm{k}$ is 
\begin{compactitem}
\item[(i)] LW-stable if $\sigma_r(\bm{q};\bm{k})<0$ for all sufficiently small, nonzero wavenumbers $\bm{q}$,
\item[(ii)] partly LW-unstable if there exist arbitrarily small, nonzero wavenumbers $\bm{q},\tilde{\bm{q}}$ such that $\sigma_r(\bm{q};\bm{k})<0$ and $\sigma_r(\tilde{\bm{q}};\bm{k})>0$,
\item[(iii)] fully LW-unstable if $\sigma_r(\bm{q};\bm{k})>0$ for all sufficiently small, nonzero wavenumbers $\bm{q}$,
\item[(iv)] LW-unstable if (ii) or (iii) hold.
\end{compactitem}
\end{defn}
\noindent 
Since $\sigma_r(\bm{0};\bm{k})=0$, property (i) is satisfied if $\bm{q}=(0,0)$ is a strict local maximum of $\sigma_r(\bm{q};\bm{k})$ for fixed $\bm{k}$, which is the case if ${\mathcal D}(\bm{k})$ is negative definite. Similarly, property (iii) is satisfied if $\bm{q}=(0,0)$ is a strict local minimum of $\sigma_r(\bm{q};\bm{k})$ for fixed $\bm{k}$, which is the case ${\mathcal D}(\bm{k})$ is negative definite.  ${\mathcal D}(\bm{k})$ is negative (positive) definite if and only if its determinant is positive and one of the diagonal entries $D_{xx}$, $D_{yy}$ is negative (positive). A sufficient condition for (ii) is that the determinant of ${\mathcal D}(\bm{k})$ be negative, in which which case ${\mathcal Q}(\bm{q};\bm{k})$ defines a saddle surface so that there are regions in the $(q_1,q_2)$-plane in which ${\mathcal Q}(\bm{q};\bm{k})>0$ and ${\mathcal Q}(\bm{q};\bm{k})<0$. Thus, to analyze the stability properties of a given TPWS, we have to take the determinant
\begin{equation}\label{eq:detD}
\mbox{det} \; {\mathcal D}(\bm{k})=(\alpha_1\beta+1)(\alpha_2\beta+1)-\frac{2(1+\beta^2)}{R_0^2}\Big((\alpha_1\beta+1)k_2^2+(\alpha_2\beta+1)k_1^2)\Big)
\end{equation}
into consideration. 

The stability properties (i)-(iii) occur in the following parameter regimes:
\begin{theorem}
\label{thm1}
Suppose that $(1+\alpha_1\beta)(1+\alpha_2\beta)\neq 0$ and let $\bm{k}$ be the wavenumber of a TPWS such that $R_0^2>0$. Define $F(\bm{k})$ by
\begin{equation}\label{eq:F}
F(\bm{k})=\Bigg(1+\frac{2(1+\beta^2)}{1+\alpha_1\beta}\Bigg)k_1^2+\Bigg(1+\frac{2(1+\beta^2)}{1+\alpha_2\beta}\Bigg)k_2^2,
\end{equation} 
and assume that the coefficients of $k_1^2$ and $k_2^2$ are both nonzero. 
\begin{compactitem}
\item[(a)] Suppose $1+\alpha_1\beta>0$ and $1+\alpha_2\beta>0$. Then $F(\bm{k})=\mu$ defines an ellipse that is inscribed in the $\mu$-circle $|\bm{k}|^2=\mu$. If $\bm{k}$ is in the interior of that ellipse the TPWS is LW-stable while for $\bm{k}$ outside of the ellipse the TPWS is partly LW-unstable. 
\item[(b)] Suppose $1+\alpha_1\beta<0$ and $1+\alpha_2\beta>0$. Then the curve $F(\bm{k})=\mu$ is either a hyperbola or an ellipse. In either case this curve intersects the $\mu$-circle in the four points defined by
\begin{equation}\label{eq:intersectionpoint}
k_1^2=\frac{-(1+\alpha_1\beta)\mu}{(\alpha_2-\alpha_1)\beta},\;\;\;\;k_2^2=\frac{(1+\alpha_2\beta)\mu }{(\alpha_2-\alpha_1)\beta}.
\end{equation}
Moreover, the TPWS is partly unstable if $F(\bm{k})<\mu$ and fully unstable if $F(\bm{k})>\mu$. 
\item[(c)] Suppose $1+\alpha_1\beta<0$ and $1+\alpha_2\beta<0$. Then every TPWS is fully unstable. 
\end{compactitem}
\end{theorem}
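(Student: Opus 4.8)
The plan is to reduce all of (a)--(c) to a single algebraic identity linking $\det\mathcal{D}(\bm{k})$ with $F(\bm{k})$ and $\mu$, and then to determine the definiteness type of $\mathcal{D}(\bm{k})$ from the signs of $1+\alpha_1\beta$, $1+\alpha_2\beta$ and $\mu-F(\bm{k})$, using the criteria recorded just after Definition~\ref{def1}: $\mathcal{D}(\bm{k})$ negative definite yields (i), $\mathcal{D}(\bm{k})$ positive definite makes $\bm{q}=\bm{0}$ a strict local minimum of $\sigma_r$ and hence yields (iii), and $\det\mathcal{D}(\bm{k})<0$ yields (ii). Writing $a=1+\alpha_1\beta$, $b=1+\alpha_2\beta$ and $c=1+\beta^2>0$, I would first substitute $R_0^2=\mu-k_1^2-k_2^2$ into (\ref{eq:detD}) and clear the denominator; using $ab\,F(\bm{k})=b(a+2c)k_1^2+a(b+2c)k_2^2$, a short rearrangement gives
\[
R_0^2\,\det\mathcal{D}(\bm{k})=(1+\alpha_1\beta)(1+\alpha_2\beta)\bigl(\mu-F(\bm{k})\bigr),
\]
so, since $R_0^2>0$, the sign of $\det\mathcal{D}(\bm{k})$ equals the sign of $(1+\alpha_1\beta)(1+\alpha_2\beta)(\mu-F(\bm{k}))$. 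I would also record two elementary facts about the diagonal entries: $D_{xx}=2ck_1^2/R_0^2-a$, hence $D_{xx}>0$ whenever $a<0$ (and similarly $D_{yy}>0$ whenever $b<0$); and, after subtracting $k_1^2+k_2^2$, the condition $F(\bm{k})<\mu$ is equivalent to $\tfrac{2c}{a}k_1^2+\tfrac{2c}{b}k_2^2<R_0^2$.

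For (a) ($a,b>0$): the coefficients $1+2c/a$ and $1+2c/b$ of $k_1^2,k_2^2$ in $F$ both exceed $1$, so $F(\bm{k})=\mu$ is an ellipse, and on it $|\bm{k}|^2<F(\bm{k})=\mu$, so the ellipse and its interior lie strictly inside the $\mu$-circle. Inside the ellipse, $F(\bm{k})<\mu$ gives $\det\mathcal{D}>0$, and the equivalent inequality $\tfrac{2c}{a}k_1^2+\tfrac{2c}{b}k_2^2<R_0^2$, having nonnegative summands, forces $2ck_1^2/R_0^2<a$ and $2ck_2^2/R_0^2<b$, i.e.\ $D_{xx},D_{yy}<0$; hence $\mathcal{D}$ is negative definite and the TPWS is LW-stable. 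Outside the ellipse, $F(\bm{k})>\mu$ gives $\det\mathcal{D}<0$, hence partly LW-unstable.

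For (b) ($a<0<b$): the coefficient of $k_2^2$ in $F$ is $1+2c/b>0$, while that of $k_1^2$ is $(a+2c)/a$, which is nonzero by hypothesis, negative when $a+2c>0$ (hyperbola) and positive when $a+2c<0$ (ellipse). Solving $F(\bm{k})=\mu$ together with $|\bm{k}|^2=\mu$ and subtracting gives $k_1^2/a+k_2^2/b=0$; feeding this back produces (\ref{eq:intersectionpoint}) with $b-a=(\alpha_2-\alpha_1)\beta>0$ (so in particular $\alpha_1\neq\alpha_2$, $\beta\neq 0$), both $k_1^2$ and $k_2^2$ positive, so the two curves meet in exactly the four points $(\pm k_1,\pm k_2)$. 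For the stability type: now $ab<0$, so $F(\bm{k})<\mu$ gives $\det\mathcal{D}<0$, hence partly LW-unstable; and $F(\bm{k})>\mu$ gives $\det\mathcal{D}>0$, while $a<0$ gives $D_{xx}>0$, whence $D_{xx}D_{yy}=\det\mathcal{D}+D_{xy}^2>0$ forces $D_{yy}>0$ as well, so $\mathcal{D}$ is positive definite and the TPWS is fully LW-unstable. For (c) ($a,b<0$): $D_{xx},D_{yy}>0$ by the facts above, and from (\ref{eq:detD}) directly $\det\mathcal{D}=ab-\tfrac{2c}{R_0^2}(bk_1^2+ak_2^2)>0$ since $ab>0$ and $bk_1^2+ak_2^2\le 0$; hence $\mathcal{D}(\bm{k})$ is positive definite for every admissible $\bm{k}$, so every TPWS is fully LW-unstable.

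The identity in the first step is the linchpin but is a routine computation; the only genuinely delicate point is the bookkeeping in the sub-cases where $\det\mathcal{D}>0$, where one must separately fix the sign of a diagonal entry to distinguish a negative-definite $\mathcal{D}$ (LW-stable) from a positive-definite one (fully LW-unstable), and must pick the right entry --- in (b) it is $D_{xx}>0$ for free from $a<0$, whereas in (a) one must unpack $F(\bm{k})<\mu$ to obtain $D_{xx},D_{yy}<0$. The geometric classification (ellipse versus hyperbola, contained in versus intersecting the $\mu$-circle) then falls out of the signs of the $k_1^2,k_2^2$-coefficients of $F$ and of $(1+\alpha_1\beta)(1+\alpha_2\beta)$.
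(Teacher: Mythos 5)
Your proposal is correct and follows essentially the same route as the paper: the identity $R_0^2\det\mathcal{D}(\bm{k})=(1+\alpha_1\beta)(1+\alpha_2\beta)\bigl(\mu-F(\bm{k})\bigr)$ is exactly the paper's observations (i)--(ii), and the case analysis via the definiteness of $\mathcal{D}(\bm{k})$ matches the paper's argument. If anything, you are slightly more explicit than the paper in part (a), where you unpack $F(\bm{k})<\mu$ to verify $D_{xx},D_{yy}<0$ and hence negative definiteness, a step the paper leaves implicit.
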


\begin{proof}
To simplify notation we set  $\tilde{\alpha}_j=1+\alpha_j\beta$, $j=1,2$. When multiplying the expression (\ref{eq:detD}) for $\mbox{det} \; {\mathcal D}(\bm{k})$ by $R_0^2/(\tilde{\alpha}_1\tilde{\alpha}_2)$ and substituting $R_0^2=\mu-k_1^2-k_2^2$ one can see directly that

\begin{compactitem}
\item[(i)] if $\tilde{\alpha}_1\tilde{\alpha}_2>0$, then $\mbox{det} \; {\mathcal D}(\bm{k})>0$ if and only if $\mu>F(\bm{k})$, and $\mbox{det} \; {\mathcal D}(\bm{k})<0$ if and only if $\mu<F(\bm{k})$;
\item[(ii)] if $\tilde{\alpha}_1\tilde{\alpha}_2<0$, then $\mbox{det} \; {\mathcal D}(\bm{k})>0$ if and only if $\mu<F(\bm{k})$, and $\mbox{det} \; {\mathcal D}(\bm{k})<0$ if and only if $\mu>F(\bm{k})$. 
\end{compactitem}


Part (a) then follows directly from (i) as for $\tilde{\alpha}_1,\tilde{\alpha}_2$ both positive we have $\mbox{det} \; {\mathcal D}(\bm{k})>0$ inside the ellipse $\mu=F(\bm{k})$, which makes ${\mathcal D}(\bm{k})$ negative definite, and $\mbox{det} \; {\mathcal D}(\bm{k})<0$ outside of this ellipse. That the ellipse is inside the $\mu$-circle is clear as the two coefficients of $F(\bm{k})$ are both $>1$. 

Part (c) also follows from (i) since for $\tilde{\alpha}_1,\tilde{\alpha}_2$ both negative we have $\mu>k_1^2+k_2^2>F(\bm{k})$, hence $\mbox{det} \; {\mathcal D}(\bm{k})$ is always positive. Since $\tilde{\alpha}_1<0$ implies $D_{xx}(\bm{k})>0$, ${\mathcal D}(\bm{k})$ is positive definite if $\tilde{\alpha}_1,\tilde{\alpha}_2$ are both negative.

Lastly, part (b) follows from (ii) in the same way as (a) follows from (i). Computing the intersections of the curve $\mu=F(\bm{k})$ with the circle $\mu=|\bm{k}|^2$ is straightforward.

\end{proof}

The case distinctions in Definition \ref{def1} and Theorem \ref{thm1} are formulated with strict inequalities and so do not include borderline cases. We treat borderline cases such as $1+\alpha_1\beta=0$ or $F(\bm{k})=\mu$ as boundary sets of positive codimension in parameter space. Case (b) of Theorem \ref{thm1} actually covers two cases, with the second case obtained via the parameter swap $(\alpha_1,k_1,q_1)\leftrightarrow (\alpha_2,k_2,q_2)$. 

In Figure \ref{fig:caseb}a-d we illustrate case (b) of Theorem \ref{thm1}. The division of the $(k_1,k_2)$-plane is depicted in a and b for the cases when $F(\bm{k})=\mu$ defines hyperbolic and elliptic curves, respectively. In Figure \ref{fig:caseb}c,~d we show the division of the $(q_1,q_2)$-plane into regions with ${\mathcal Q}>0$ and ${\mathcal Q}<0$ for the two points $\bm{k}=(0,0)$ and $\bm{k}=(k_1^\ast,k_2^\ast)$, which is close to the $F(\bm{k})=\mu$ curve, marked in Figure \ref{fig:caseb}a. When $\bm{k}$ approaches this curve the two lines separating the four regions merge and the region ${\mathcal Q}<0$ disappears as $\mbox{det} \; {\mathcal D}(\bm{k})\rightarrow 0$.

\subsection{Stability Boundaries and Effect of Anisotropy on Chaotic Solutions}
\label{subsec:boundaries}

\paragraph{Eckhaus stability boundary}
The ellipse $F(\bm{k})=\mu$ in case (a) of Theorem \ref{thm1} extends the Exkhaus stability boundary for the CGLE in the 1D and 2D cases (see, e.g. \cite{Aranson2002,DaKra1998}) to the 2D anisotropic case of the ACGLE. In particular, in the 2D isotropic case ($\alpha_1=\alpha_2$) the ellipse becomes a circle. The elliptic stability boundary along with the conditions $1+\alpha_j\beta>0$, $j=1,2$, for the ACGLE was already established in \cite{dangelmayr2008modulational} in a more general setting and using a different notation. In \cite{dangelmayr2008modulational}, in addition to the LW-stability boundary, short-wavelength instabilities have been analyzed that may preceed the LW-instability when $|\bm{k}|$ is increased along a ray emanating from the origin, thereby extending the stability analysis pursued in \cite{VanHarten1995} for the 1D case. 

\paragraph{Benjamin-Feir-Newell stability boundary}
The stability analysis in \cite{dangelmayr2008modulational} is exclusively for the case when stable TPWS's exist, that is, when $1+\alpha_1\beta>0$ and $1+\alpha_2\beta>0$. These two conditions extend the Newell criterion \cite{Newell1974}, $1+\alpha\beta>0$, for the existence of stable TPWS's of the CGLE, when $\alpha_1=\alpha_2\equiv\alpha$, to the anisotropic case. For the CGLE, the curve in the $(\alpha,\beta)$-plane defined by $1+\alpha\beta=0$ is referred to as the Benjamin-Feir-Newell- (BFN-)stability boundary \cite{Aranson2002,Chate1996}. A comprehensive numerical study of the CGLE in the BFN-unstable as well as BFN-stable-regimes was performed in \cite{Chate1996}.

In the anisotropic case of the ACGLE (\ref{eq:ACGLE}), the boundary separating the parameter region with no stable TPWS's  from the region in which stable TPWS's exist is given by the condition $1+\alpha_1\beta=0$ or $1+\alpha_2\beta=0$. This defines two surfaces in the $(\alpha_1,\alpha_2,\beta)$-space that intersect in the plane $\alpha_1=\alpha_2$ in the BFN-stability boundary for the CGLE. These two surfaces, which we also refer to as the BFN-stability boundary, along with the regions from Theorem \ref{thm1}, are visualized in Figure \ref{fig:BFNsurface}. In Section \ref{sec:WNA}  we will study phase equations governing the evolution of solutions $A(x,y,t)$ to the  ACGLE for parameters near near the BFN-stability boundary.

\begin{figure}[t]
\begin{center}
\mbox{}\hspace{-20pt}
  \subfloat[]{\includegraphics[width=.43\textwidth]{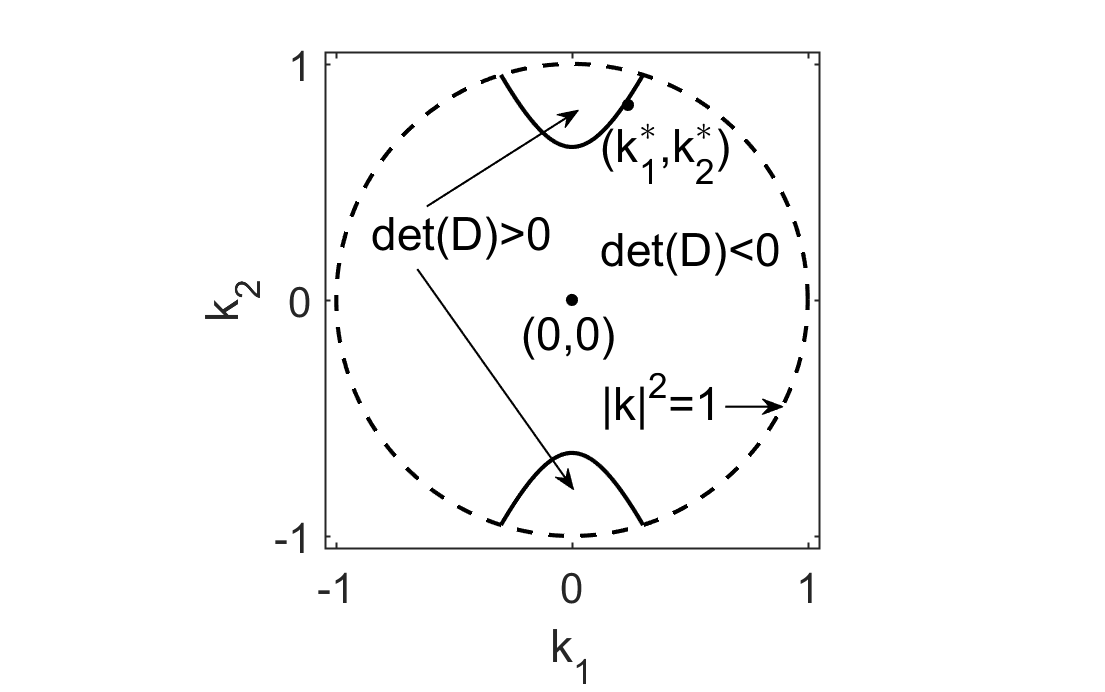}} \hspace{-20pt} 
  \subfloat[]{\includegraphics[width=.36\textwidth]{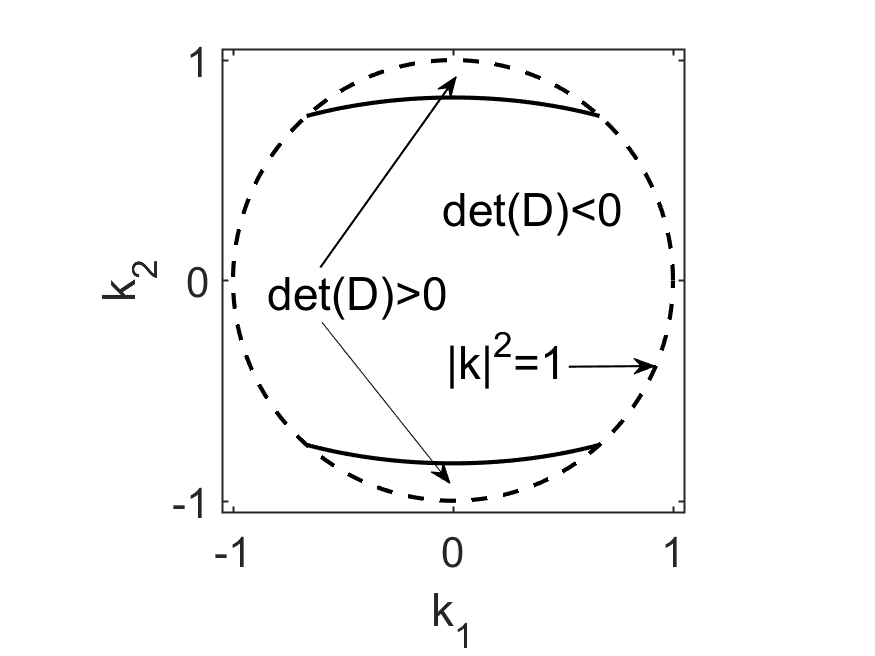}}\\
	\subfloat[]{\includegraphics[width=.36\textwidth]{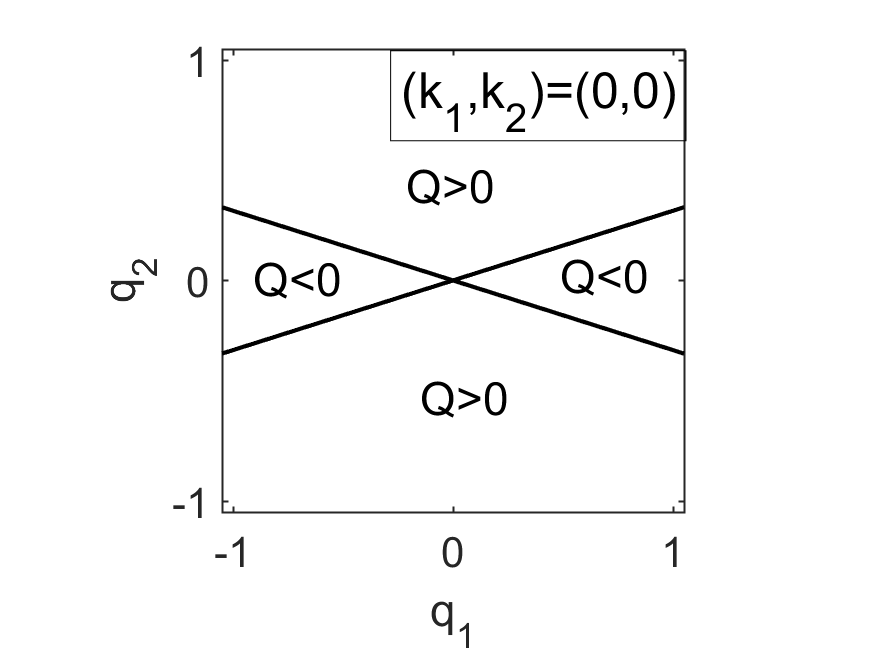}} 
  \subfloat[]{\includegraphics[width=.36\textwidth]{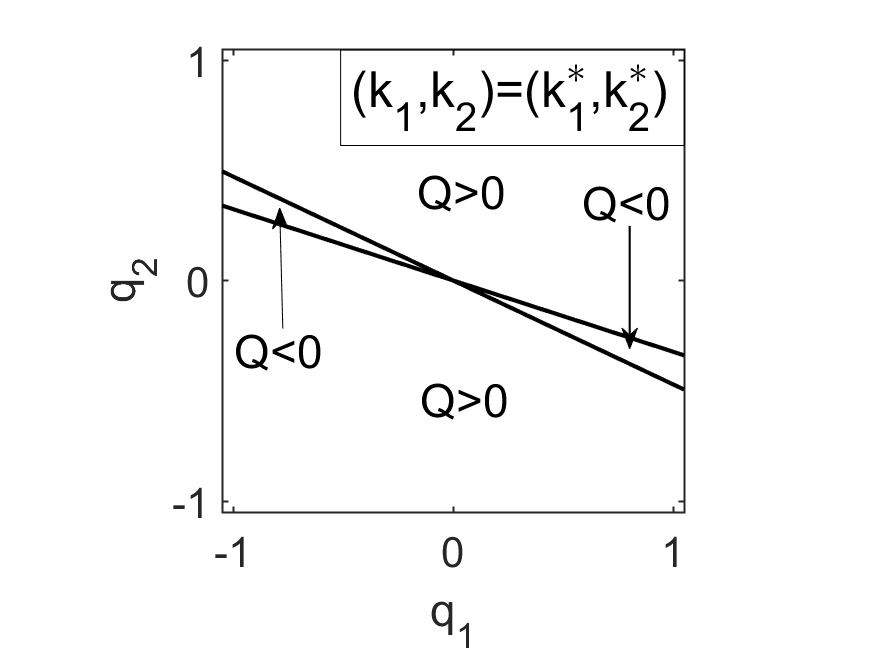}}
	\end{center}
  \caption[]{(a) Circle $|\bm{k}|=\mu=1$  (dashed) and the segments of the curve $F(\bm{k})=1$ (solid) that are inside that circle for $\alpha_1=-1.2$, $\alpha_2=2$, $\beta=1.1$. For these parameters $F(\bm{k})=\mu$ defines hyperbolae. The regions in the $(q_1,q_2)$-plane where ${\mathcal Q}>0$ and ${\mathcal Q}<0$ are depicted in (c) and (d) for the points $(k_1,k_2)=(0,0)$ and $(k_1^\ast,k_2^\ast)=(0.2375,0.8261)$ marked in (a). (b) Same as (a) for $\alpha_2=-\alpha_1=8$, $\beta=1.1$; here $F(\bm{k})=\mu$ defines an ellipse. 
	} 
  \label{fig:caseb}
\end{figure}

\begin{figure}[htbp]
	\centering
  \includegraphics[width=.5\textwidth]{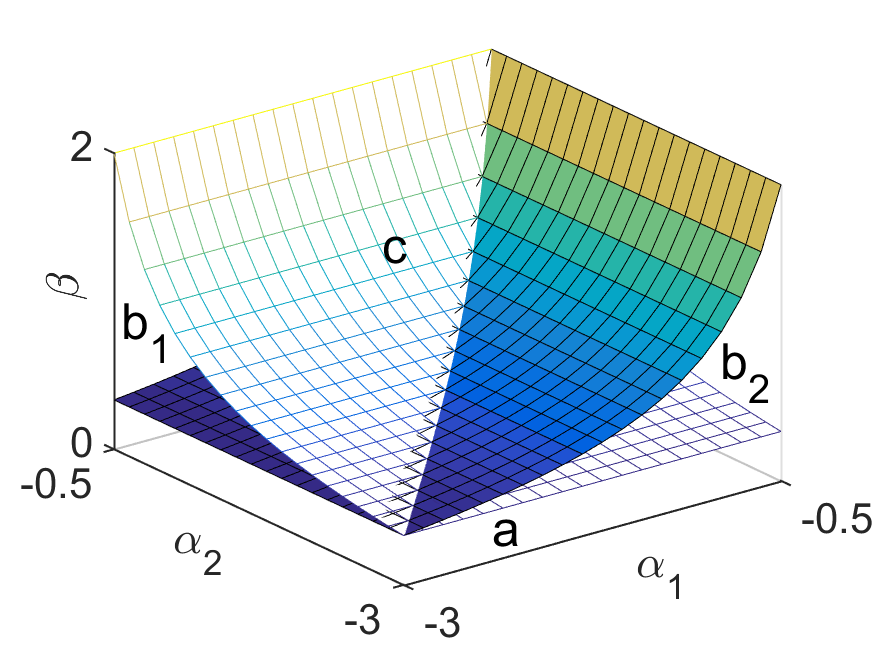}
  \caption[]{Separation surfaces $1+\alpha_1\beta=0$, $1+\alpha_2\beta=0$ for $\beta>0$ and regions where the cases (a), (b), (c) from Theorem \ref{thm1} occur. For (b) the subscripts (b$_1$) and (b$_2$) distinguish $1+\alpha_1\beta<0$ and $1+\alpha_2\beta<0$, respectively.}
	\label{fig:BFNsurface}
\end{figure}

\paragraph{Linear and nonlinear phase equations}
The eigenvalue (\ref{eval:2}) governs the linearization of the evolution equation for the spatial Fourier transform $\hat{\phi}$, with wavenumbers $\bm{q}=(q_1,q_2)$, of the phase $\phi$ with $\partial_t\hat{\phi}=\sigma(\bm{q};\bm{k})\hat{\phi}$. Using the long-wave base-state \(k_1=k_2=0\) (bulk oscillation, for this state $\sigma$ has no odd powers of $\bm{q}$), truncating the expansion of $\sigma$ at fourth order, and then taking the inverse Fourier transform of the truncated equation for $\hat{\phi}$ yields the linear phase equation
\begin{equation}\label{eq:lieanr_phase_aCGL}
  \partt{\phi} = (1+\alpha_1\beta)\partxx{\phi} + (1+\alpha_2\beta)\partyy{\phi} - \frac{\alpha_1^2(1+\beta^2)}{2R_0^2}\partxxxx{\phi} - \frac{\alpha_2^2(1+\beta^2)}{2R_0^2}\partyyyy{\phi} - \frac{\alpha_1\alpha_2(1+\beta^2)}{R_0^2}\partxxyy{\phi}.
\end{equation}
This equation can be extended to a nonlinear equation for $\phi$ using the facts that (since \(k_1=k_2=0\)) the equation must be invariant under $x\rightarrow -x$ and $y\rightarrow -y$, and that it depends only on spatial derivatives. The lowest-order nonlinear terms that satisfy these conditions are $(\partial_x\phi)^2$ and $(\partial_y\phi )^2$. Adding these terms to (\ref{eq:lieanr_phase_aCGL}), we arrive at
\begin{equation}\label{eq:nonlinear_phase_aCGL}
  \begin{split}
  \partt{\phi} &= (1+\alpha_1\beta)\partxx{\phi} + (1+\alpha_2\beta)\partyy{\phi} - \frac{\alpha_1^2(1+\beta^2)}{2R_0^2}\partxxxx{\phi} - \frac{\alpha_2^2(1+\beta^2)}{2R_0^2}\partyyyy{\phi} \\ &\quad - \frac{\alpha_1\alpha_2(1+\beta^2)}{R_0^2}\partxxyy{\phi} + g_0(\partx{\phi})^2 + h_0(\party{\phi})^2,
  \end{split}
\end{equation}
with yet unknown coefficients \(g_0\) and \(h_0\). From the phase equation (\ref{eq:nonlinear_phase_aCGL}), we see that the BFN-instability \(1 + \alpha\beta<0\) for the CGLE manifests itself now in each of the diffusion terms with different \(\alpha_i\) values because of the anisotropy in the ACGLE (\ref{eq:ACGLE}). This again demonstrates the possibility of a traveling plane wave  to be stable in one direction and unstable in the other giving rise to chaotic solutions that do not occur for the isotropic CGLE. 

In Section \ref{sec:WNA} we derive a phase equation of the form of (\ref{eq:nonlinear_phase_aCGL}) for slow variables using a multiple-scale expansion. 

\paragraph{Examples for the effect of anisotropy on chaotic solutions}
As pointed out previously, the solutions of the ACGLE encompass those of the  CGLE (for \(\alpha_1 = \alpha_2\)), but the added anisotropy allows for solutions that are not possible in the isotropic case. Not only can traveling plane waves have different degrees of stability or instability in the \(x\) and \(y\) directions, they can now be both stable and unstable to perturbations depending on direction, as captured by Theorem \ref{thm1}(b) and illustrated in Figure \ref{fig:caseb}. For example, in the CGLE phase chaos demonstrates itself as an evolving cellular structure \cite{Chate1996}. This type of phase chaos can exist in the ACGLE, but as one of the linear dispersion parameters, say $\alpha_2$, is adjusted so that, as traveling waves become BFN stable in one direction, the isotropic cellular behavior gives way to a phase-chaotic structure of ripples which are aligned along the stable direction; see Figure \ref{fig:ACGL_chaos}a (all numerical solutions shown in this paper are for $\mu=1$). A similar behavior also happens for parameter values which yield defect chaos in the CGLE. As one direction is made BFN-stable, ripples appear aligned along the stable direction. Defects appear and travel mostly along the unstable direction, see Figure \ref{fig:ACGL_chaos}b. 

Figure \ref{fig:hole_solutions} illustrates the effect of different linear dispersion coefficients on hole and shock wall solutions of the isotropic CGLE.

\begin{figure}[t]
	\centering
  	\includegraphics[width=.49\textwidth]{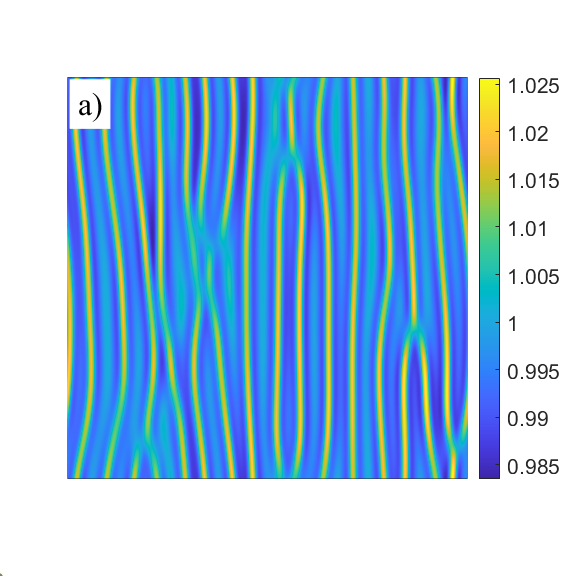} \hfill
  	\includegraphics[width=.49\textwidth]{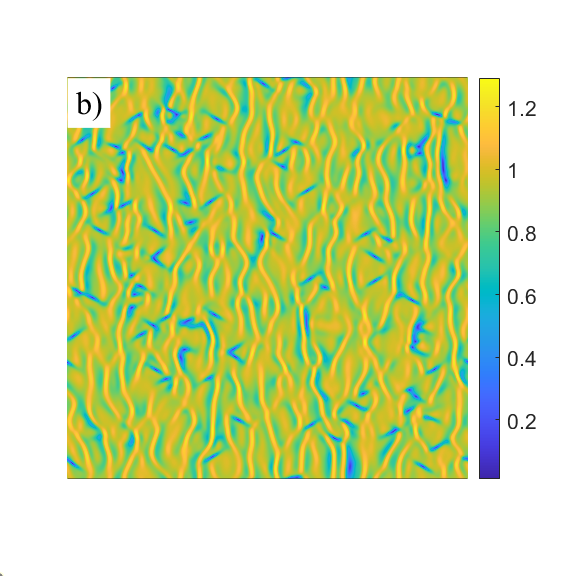}
  	\caption[Phase and defect solutions to the ACGLE: $\mbox{Re}A$???]{Snapshots of chaotic solutions to the ACGLE, where traveling plane waves are stable in the \(y\) direction and unstable in the \(x\) direction. Both figures show \(|A|\). (a) Time snapshot of a simulation with parameter values $\alpha_1 = -1.2,$ $\alpha_2 = 2$, $\beta = 1.1$, at time $t= 600$.  This simulation shows phase chaos where the amplitude is bounded away from zero, \emph{i.e.} \(|A|>0\), and is seen to be around \(1\). The cellular structure of the phase chaotic solution of the CGLE, see \cite{Chate1996}, has been replaced with a ripple structure. (b) Time snapshot of a simulation with parameter values $\alpha_1 = -3,$ $\alpha_2 = 2$, $\beta = 1.1$, at time $t= 600$. The stability in the \(y\) direction is also apparent in this defect-chaos parameter regime, which shows partial coherence in the stable direction. The regions of defects appear to travel along the unstable horizontal direction.}
  \label{fig:ACGL_chaos}
\end{figure}
\begin{figure}[t] 
  \includegraphics[width=.24\textwidth]{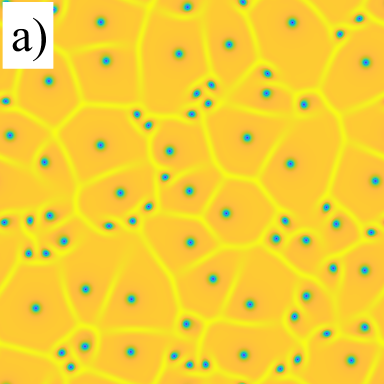} \hfill
  \includegraphics[width=.24\textwidth]{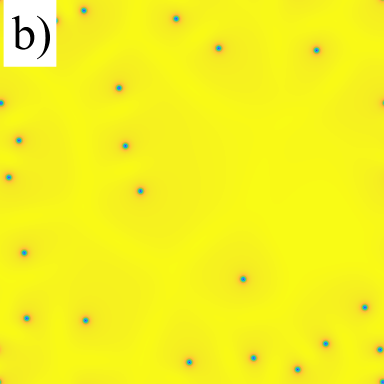} \hfill
  \includegraphics[width=.24\textwidth]{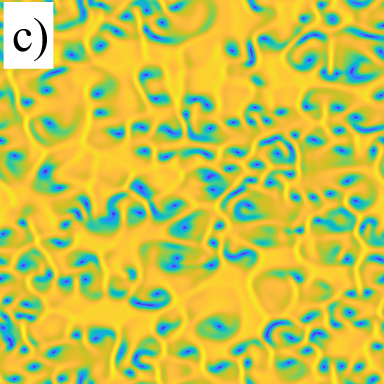} \hfill
  \includegraphics[width=.24\textwidth]{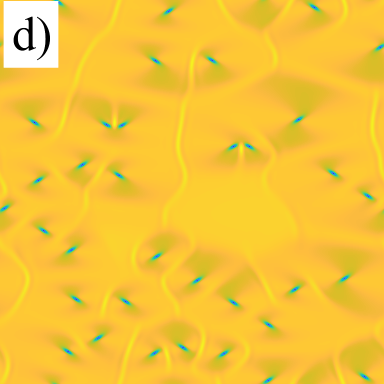}
  \caption[Sample of ACGLE solutions]{Four snapshots of solutions to the ACGLE (\ref{eq:ACGLE}). a) Hole and shock-wall solution. The shock walls keep the spiral defects separated. Here \(\alpha_1 = \alpha_2 = -0.22\). b) The shock walls no longer contain the spirals which are free to diffuse and annihilate. Here \(\alpha_1 = \alpha_2 = 0.22\). c) Spiral-defect chaos with \(\alpha_1 = -5\cdot 0.22\) and \(\alpha_2 = -0.22\). d) The previous 3 solution types can occur in the isotropic case, but this is a uniquely anisotropic solution. There are spiral defects, whose centers are skewed into ellipses by the anisotropy, together with phase-chaotic ripples in the \(y\)-direction. All solutions have \(\beta = 1.1\) and were solved to \(t = 600\) on a grid of \([-100,100]\times[-100,100]\) except d) which is seen on a grid of \([-150,150]\times[-150,150]\) to better capture the behavior. These hole-type solutions empirically require a different initial condition compared to the solutions in Figures \ref{fig:ACGL_chaos}, \ref{fig:ACGLE_AvsPhi_Isotropic}, \ref{fig:ACGLE_AvsPhi_Anisotropic_codim1}, and \ref{fig:ACGLE_AvsPhi_Anisotropic_codim2} below.}
  \label{fig:hole_solutions}
\end{figure}

\section{Weakly Nonlinear Analysis for Long-Wave Instability}
\label{sec:WNA}

In this section, we first establish a general phase equation for the ACGLE (\ref{eq:ACGLE}) by applying a multiple-scale expansion to the perturbational amplitude $r$ and phase $\phi$ in (\ref{eq:ACGLE_travelingwave_perturbed}) for $k_1=k_2=0$. The resulting system of equations allows the amplitude to be slaved to the phase, leaving an equation for the phase alone. In this derivation, no assumptions are yet made about the parameters. We then specify this phase equation for the case when $(\alpha_1,\alpha_2,\beta)$ is close to the BFN-stability boundary, for which we distinguish two cases: the case $1+\alpha_1\beta=0$ and $1+\alpha_2\beta>0$ (codimension-one, Subsection \ref{subsec:cod1}) and the case $1+\alpha_1\beta=0$ and $1+\alpha_2\beta=0$ (codimension-two, Subsection \ref{subsec:cod2}).

Substituting \(\bm{k}=(0,0)\) into the traveling plane wave solution (\ref{eq:ACGLE_travelingwave_solution}) leads to the following ansatz for $A$ with perturbed amplitude and phase,
\begin{equation}\label{eq:aCGL_ansatz_longwave}
    A(x,y,t) = \sqrt{\mu}(1+r(x,y,t))e^{i(- \beta\mu t + \phi(x,y,t))}.
\end{equation}
For the slow space and time scalings, we use
\begin{equation}\label{eq:aCGL_phase_scalings_phi}
    r(x,y,t) = \delta^6 W(\delta x, \delta y, \delta^4 t), \qquad \phi(x,y,t) = \delta^2 \Phi(\delta x, \delta y, \delta^4 t),
\end{equation}
extending the scaling introduced in \cite{DSSS2009} for the 1D CGLE to the ACGLE. After separating real and imaginary parts, scaling, and dividing the \(W\) and \(\Phi\) equations each by \(\delta^6\), we obtain
\begin{align}
    \begin{split}
      \delta^4\partT{W} &= \delta^2(\partXX{W} + \partYY{W}) - 2\delta^4(\alpha_1\partX{W}\partX{\Phi} + \alpha_2\partY{W}\partY{\Phi})\\ &\qquad  + (1+\delta^6W)(-2\mu W - \delta^6\mu W - (\partX{\Phi})^2 - (\partY{\Phi})^2 - \delta^{-2}\alpha_1\partXX{\Phi} - \delta^{-2}\alpha_2\partYY{\Phi}),
    \end{split}\label{eq:longwave_W_with_W}\\
    \begin{split}
      \partT{\Phi} &= -2\beta\mu W - \delta^6\beta\mu W^2 - \alpha_1(\partX{\Phi})^2 - \alpha_2(\partY{\Phi})^2 + \delta^{-2}(\partXX{\Phi} + \partYY{\Phi})\\ &\qquad + \frac{\delta^2(\alpha_1\partXX{W} + \alpha_2\partYY{W}) + 2\delta^4(\partX{W}\partX{\Phi} + \partY{W}\partY{\Phi})}{1+\delta^6W}, \label{eq:longwave_Phi_with_W}
    \end{split}
\end{align}
where $(X,Y,T)=(\delta x,\delta y, \delta^4 t)$. The leading-order solution of the \(W\) equation is
\begin{equation}\label{eq:Wapprox_func_of_Phi}
    W = \frac{-(\alpha_1\partXX{\Phi} + \alpha_2\partYY{\Phi})}{2\delta^2\mu} + \mathcal{O}(1).
\end{equation}
Refining this solution up to $\mathcal{O}(1)$ gives
\begin{equation}\label{eq:longwave_W_sub_soln}
    W = \frac{-(\alpha_1\partXXXX{\Phi} + (\alpha_1 + \alpha_2)\partXXYY{\Phi} + \alpha_2\partYYYY{\Phi})}{4\mu^2} - \frac{(\partX{\Phi})^2 + (\partY{\Phi})^2}{2\mu} - \frac{\alpha_1\partXX{\Phi} + \alpha_2\partYY{\Phi}}{2\delta^2\mu} + \mathcal{O}(\delta^2).
\end{equation}
Equation (\ref{eq:longwave_W_sub_soln}) can be used to eliminate \(W\) from the equation (\ref{eq:longwave_Phi_with_W}) for \(\Phi\). Doing so yields
\begin{equation}\label{eq:2D_aCGL_phase_full}
    \begin{split}
      \partT{\Phi} &= \frac{(1+\alpha_1\beta)\partXX{\Phi} + (1+\alpha_2\beta)\partYY{\Phi}}{\delta^2} \\ &\qquad+ \frac{(\beta\alpha_1-\alpha_1^2)\partXXXX{\Phi} + \big(\beta(\alpha_1+\alpha_2)-2\alpha_1\alpha_2\big)\partXXYY{\Phi} + (\beta\alpha_2-\alpha_2^2)\partYYYY{\Phi}}{2\mu}\\ &\qquad + (\beta-\alpha_1)(\partX\Phi)^2 + (\beta-\alpha_2)(\partY\Phi)^2+ \mathcal{O}(\delta^2).
    \end{split}
\end{equation}

We have now found the coefficients for the nonlinear terms in Equation (\ref{eq:nonlinear_phase_aCGL}). The linear coefficients of (\ref{eq:nonlinear_phase_aCGL}) and (\ref{eq:2D_aCGL_phase_full}) are equal on the BFN neutral stability curve \(1 + \alpha_1\beta = 1 + \alpha_2\beta = 0\) and match the form of the isotropic phase equation given in \cite{lega2004phase}.

As in (\ref{eq:nonlinear_phase_aCGL}), the anisotropy allows for the possibility of a Benjamin-Feir-Newell type instability to occur in the \(X\) or \(Y\) directions individually, or for both directions to become unstable simultaneously. This naturally leads to codimension-one and codimension-two phase equations.

The solutions of the phase equation can be compared to the solutions of the phase of solutions of the ACGLE, as seen in Figure \ref{fig:ACGLE_AvsPhi_Isotropic} which shows isotropic phase chaos. The phase of \(A\) should look similar to the solution of the phase equation \(\Phi\). We can use the approximation (\ref{eq:Wapprox_func_of_Phi}) for \(W(\Phi)\) after solving the phase equation for \(\Phi\) and recreate solutions of the ACGLE by substituting \(W(\Phi)\) and \(\Phi\) into the ansatz (\ref{eq:aCGL_ansatz_longwave}) using (\ref{eq:aCGL_phase_scalings_phi}).

\begin{figure}[!htbp] 
  \includegraphics[width=.24\textwidth]{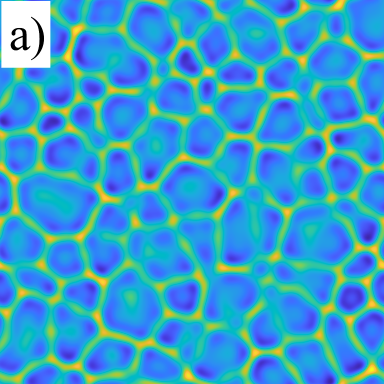} \hfill
  \includegraphics[width=.24\textwidth]{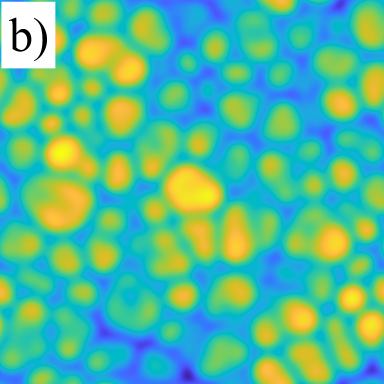} \hfill
  \includegraphics[width=.24\textwidth]{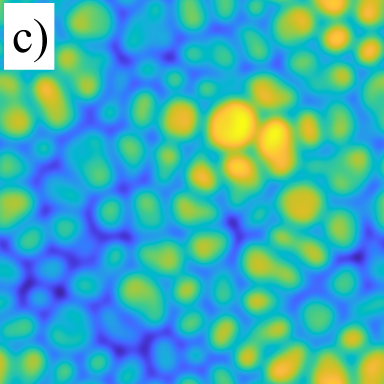} \hfill
  \includegraphics[width=.24\textwidth]{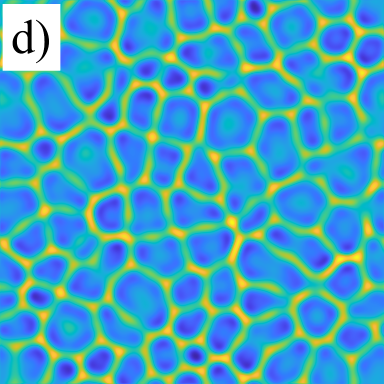}   
  \caption[Comparison of ACGLE and Phase Equation]{A solution $A(x,y,t)$ to the ACGLE with parameters \(\alpha_1=\alpha_2=-0.22\), \(\beta=5\) was simulated on a square domain \([-100,100]\times[-100,100]\) to time \(t=T/\delta^4 \approx 6200\). (a) Absolute value $|A|$ and (b) phase angle $\texttt{angle}(A)$ of $A$. (c) A solution \(\Phi(X,Y,T)\) of the phase equation (\ref{eq:2D_aCGL_phase_full}) with $\delta=0.3$ at time \(T=50\) with the same parameters used to solve \(A\). The \(\Phi\) calculated from the phase equation displays qualitatively the same behavior as the phase of the ACGLE with these isotropic parameters. (d) The absolute value $|A|$ for $A$ constructed from \(\Phi\) using Equations (\ref{eq:aCGL_ansatz_longwave}), (\ref{eq:aCGL_phase_scalings_phi}) and (\ref{eq:Wapprox_func_of_Phi}).}
  \label{fig:ACGLE_AvsPhi_Isotropic}
\end{figure}

\subsection{Codimension-One Bifurcation}
\label{subsec:cod1}

For the codimension-one case, the condition for the \(y\)-direction to be BFN-stable is that \(1+\alpha_2\beta > 0\). Regarding $1+\alpha_1\beta$, we consider the situation in which we are close to the BFN-stability boundary $1+\alpha_1\beta=0$. Extending again the scaling used in \cite{DSSS2009} for the 1D CGLE to our 2D anisotropic case we unfold this degeneracy by setting 
\begin{equation}
1+\alpha_1\beta =\kappa\delta^2,
\end{equation}
where $\kappa$ is treated as ${\mathcal O}(1)$-parameter. Note that for $\kappa<0$ the original parameters are in region (b$_1$) in Figure \ref{fig:BFNsurface}. Since we have BFN stability in the $y$ direction, we have to  balance the $\delta^4$-scaling of $t$ by a $\delta^2$-scaling of $y$. Thus, we set \(\tilde{Y} = \delta Y = \delta^2 y\), and use \(Y = \tilde{Y}/\delta\) in (\ref{eq:2D_aCGL_phase_full}). Omitting the tilde, we arrive at
\begin{equation}\label{eq:ACGLE_codim1_phase_with_a1}
	\partT{\Phi} = \kappa\partXX{\Phi} + (1+\alpha_2\beta)\partYY{\Phi} + \frac{(\beta\alpha_1-\alpha_1^2)\partXXXX{\Phi}}{2\mu} + (\beta-\alpha_1)(\partX\Phi)^2 + \mathcal{O}(\delta^2),
\end{equation}
with \(\kappa\) as a control parameter. Assuming $\beta\ne 0$, using $\alpha_1=-1/\beta+{\mathcal O}(\delta^2)$ in the fourth-order derivative and nonlinear terms, and omitting ${\mathcal O}(\delta^2)$ gives the final codimension-one phase equation
\begin{equation}\label{eq:2D_aCGL_phase_codim1}
	\partT{\Phi} = \kappa\partXX{\Phi} + (1+\alpha_2\beta)\partYY{\Phi} - \frac{1}{2\mu}\bigg(1+\frac{1}{\beta^2}\bigg)\partXXXX{\Phi} + (\beta+\frac{1}{\beta})(\partX\Phi)^2,
\end{equation}
which retains the nonlinear dispersion parameter \(\beta\) and the \(y\)-direction linear dispersion parameter \(\alpha_2\) from the ACGLE (\ref{eq:ACGLE}). We emphasize again that for (\ref{eq:2D_aCGL_phase_codim1}) to be applicable it is required that \(1+\alpha_2\beta\) be positive and ${\mathcal O}(1)$ so that the $Y$-diffusion coefficient stays positive and no fourth-order linear derivative terms or nonlinear terms with respect to $Y$ are needed to saturate the instability.

Assuming $\kappa<0$ and using the rescaling 
$$\hat{T}=-\kappa\hat{\mu}T,\;\;\:\hat{X}=\sqrt{\hat{\mu}}X,\;\;\;\hat{Y}=\sqrt{\frac{-\kappa\hat{\mu}}{1+\alpha_2\beta}}Y,\;\;\;\hat{\Phi}=-\frac{2}{\kappa}\big(\beta+\frac{1}{\beta}\big)\Phi,$$
where $\hat{\mu}=\frac{-2\kappa\mu}{1+1/\beta^2}$, all coefficients in (\ref{eq:2D_aCGL_phase_codim1}) become normalized, and the equation simplifies to (with the hats omitted)
\begin{equation}
\partial_T\Phi=-\partial_X^2\Phi-\partial_X^4\Phi+\frac{1}{2}(\partial_X\Phi)^2+\partial_Y^2\Phi.
\label{eq:2D_aCGL_phase_codim1_rescaled}
\end{equation}
The fact that all coefficients in the codimension-one phase equation can be normalized means that there is a unique phase dynamics (apart from varying initial conditions) that controls the ACGLE-dynamics for generic $1+\alpha_2\beta>0$ and sufficiently small negative values of $1+\alpha_1\beta$. Note that Equation (\ref{eq:2D_aCGL_phase_codim1_rescaled}) is just the standard 1D KS-equation augmented by a diffusion-term in $Y$.

\begin{figure}[!htbp] 
  \includegraphics[width=.24\textwidth]{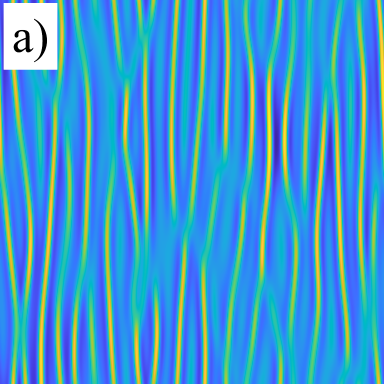} \hfill
  \includegraphics[width=.24\textwidth]{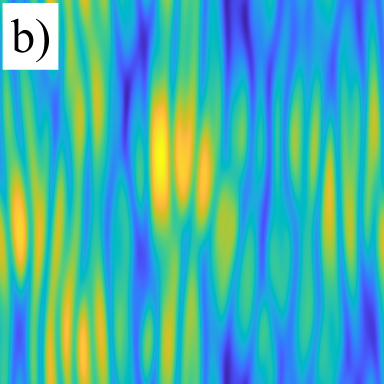} \hfill
  \includegraphics[width=.24\textwidth]{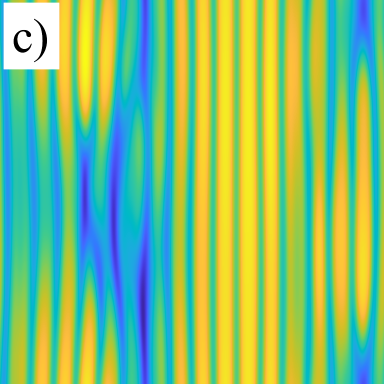} \hfill
  \includegraphics[width=.24\textwidth]{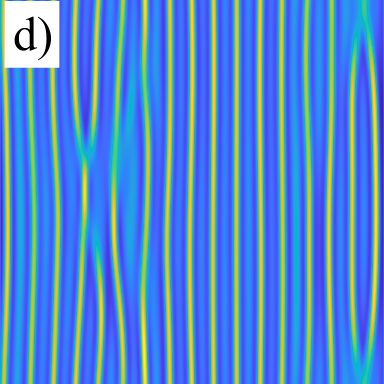}
  \caption[Comparison of ACGLE and Codimension-One Phase Equation]{A solution $A(x,y,t)$ to the ACGLE with parameters \(\alpha_1 = (-\delta-1)/\beta \approx -0.22\), \(\alpha_2 = 1/\beta = 0.20\), \(\beta=5\), \(\delta = 0.3\), and \(\kappa = -1\) was simulated on a square domain \([-200,200]\times[-200,200]\) to time \(t = 30/\delta^4 \approx 3700\). (a) Absolute value \(|A|\) and (b) phase \texttt{angle(A)}. (c) The solution \(\Phi\) of the phase equation (\ref{eq:2D_aCGL_phase_codim1}) using the same parameters at time \(T=30\). The solution \(\Phi\) displays similar behavior as the phase of the ACGLE, but with fewer pinches due to the effect of the nonlinear term and how it affects the pinching length \cite{rost1995anisotropic}. The computational domain is larger than in Figures \ref{fig:ACGLE_AvsPhi_Isotropic} and \ref{fig:ACGLE_AvsPhi_Anisotropic_codim2} because the smaller domain resulted in zero pinches and perfectly vertical ripples. (d) $|A|$ for $A$ approximated from \(\Phi\) using Equations (\ref{eq:aCGL_ansatz_longwave}), (\ref{eq:aCGL_phase_scalings_phi}) and (\ref{eq:Wapprox_func_of_Phi}).}
  \label{fig:ACGLE_AvsPhi_Anisotropic_codim1}
\end{figure}

For constructing approximative solutions of the ACGLE from solutions to the codimension-one phase equation we choose to retain the parameters in Equation (\ref{eq:2D_aCGL_phase_codim1}) so that its solutions may be directly compared to solutions of the ACGLE with the same parameters; see Figure \ref{fig:ACGLE_AvsPhi_Anisotropic_codim1}. 

The rescaled codimension-one phase equation (\ref{eq:2D_aCGL_phase_codim1_rescaled}) coincides with the equation derived by  Rost and Krug \cite{rost1995anisotropic}  who describe it as an ``elastically coupled chain" of one-dimensional Kuramoto-Sivashinsky systems. They derive an equation similar to (\ref{eq:2D_aCGL_phase_codim1_rescaled}) to analyze the ``pinching length" of the patterns produced by their version of the aKS equation, which is only anisotropic in the 2nd-order derivative terms and the nonlinear terms. They consider the case for which the 4th-order spatial derivatives are isotropic.

\subsection{Codimension-two bifurcation}
\label{subsec:cod2}

For the codimension-two case, both of the second-order derivative terms in (\ref{eq:2D_aCGL_phase_full}) become unstable simultaneously. We unfold this degeneracy, which occurs when $1+\alpha_1\beta$ and $1+\alpha_2\beta$ are both zero, by setting $1+\alpha_j\beta=\kappa_j\delta^2$ for $j=1,2$. Using $\alpha_j=-1/\beta+{\mathcal O}(\delta^2)$ in the fourth-order derivative terms and the nonlinear terms of (\ref{eq:2D_aCGL_phase_full}) and truncating this equation at ${\mathcal O}(1)$ then gives the codimension-two phase equation
\begin{equation}\label{eq:2D_aCGL_phase_codim2}
	\partT{\Phi} = \kappa_1\partXX{\Phi} + \kappa_2\partYY{\Phi} - \bigg(1+\frac{1}{\beta^2}\bigg)\frac{\partXXXX{\Phi} + 2\partXXYY{\Phi} + \partYYYY{\Phi}}{2\mu} + (\beta+\frac{1}{\beta})\big((\partX\Phi)^2 + (\partY\Phi)^2\big).
\end{equation}
This equation is an anisotropic version of the (isotropic) Kuramoto-Sivashinsky (KS) equation (\ref{eq:KSeq_Intro}) with parameters related to the ACGLE. The anisotropy is revealed in the second-order derivative terms, while the fourth-order derivative terms and the nonlinear terms are still isotropic since the codimension-two degeneracy occurs with $\alpha_1=\alpha_2$.  As a consequence, so called cancellation modes leading to blow-up solutions are not possible for (\ref{eq:2D_aCGL_phase_codim2}), in contrast to the more general anisotropic KS equation studied in \cite{kalogirou2015} and \cite{rost1995anisotropic}, in which the two nonlinear terms may have different signs.

To explore the \((\kappa_1,\kappa_2)\)-parameter plane numerically, we set $(\kappa_1,\kappa_2)=\rho(\cos\theta,\sin\theta)$. Assuming $\rho>0$, the rescaling 
$$\hat{T} =\rho\hat{\mu}T,\;\;\;(\hat{X},\hat{Y})=\sqrt{\hat{\mu}}(X,Y),\;\;\;\hat{\Phi}=\frac{2}{\rho}\big(\beta+\frac{1}{\beta}\big)\Phi,$$
with $\hat{\mu}=\frac{2\mu\rho}{1+1/\beta^2}$, simplifies (\ref{eq:2D_aCGL_phase_codim2}) to (with the hats omitted)
\begin{equation}\label{eq:2D_aCGL_phase_codim2_rescaled}
\partT\Phi=\cos(\theta)\partXX{\Phi}+\sin(\theta)\partYY{\Phi}-\big(\partXXXX{\Phi}+2\partXXYY{\Phi}+\partYYYY{\Phi}\big)+\frac{1}{2}\big((\partX\Phi)^2 + (\partY\Phi)^2\big).
\end{equation}
Two examples of numerical solutions of the rescaled codimension two phase equation (\ref{eq:2D_aCGL_phase_codim2}) are depicted in Figure \ref{fig:CGLE_codim2_soln_examples} where both \(\theta\) values are in the third quadrant so that \(\kappa_1,\kappa_2\) are both negative.

The codimension-two phase equation can also be used to recreate solutions of the ACGLE. As in the codimension-one case, we choose to retain the parameters in Equation (\ref{eq:2D_aCGL_phase_codim2}) for that purpose so that its solutions may be directly compared to solutions of the ACGLE with the same parameters. An example of this is shown in Figure \ref{fig:ACGLE_AvsPhi_Anisotropic_codim2}. The phase of the solution \(A\) to the ACGLE and the solution \(\Phi\) to the phase equation display qualitatively the same behavior. Using the lowest order approximation of \(W\) together with \(\Phi\) yields a recreation of a ACGLE solution \(A\). Using the phase equation gives considerable computational time savings. The ACGLE was simulated to time \(t\approx 5000\) while the phase equation only to time \(T=40\). Similar times savings appear in the other phase equations, as seen in Figures \ref{fig:ACGLE_AvsPhi_Isotropic} and \ref{fig:ACGLE_AvsPhi_Anisotropic_codim1}. Time savings would be much greater for smaller \(\delta\) since \(t = T/\delta^4\).  

\begin{figure}[!htbp] 
  \includegraphics[width=.45\textwidth]{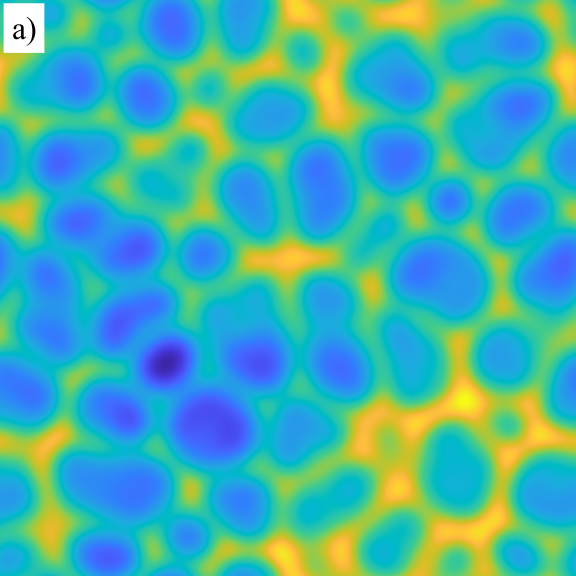} \hfill
  \includegraphics[width=.45\textwidth]{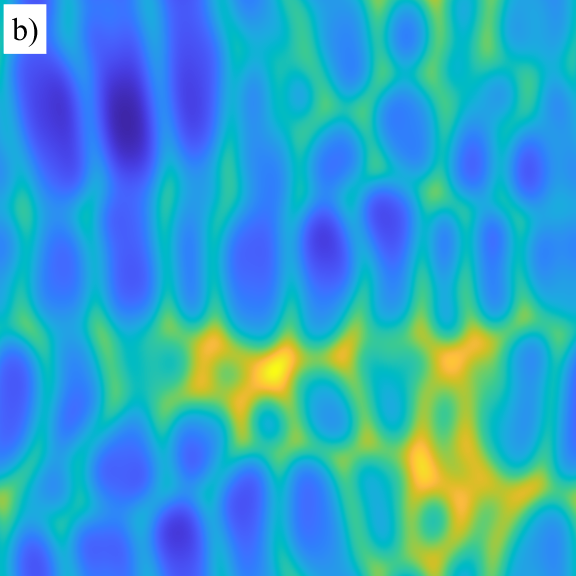}
  \caption[Codimension-two phase equation solution examples]{Snapshots of \(\Phi\) from (\ref{eq:2D_aCGL_phase_codim2}) for (a) an isotropic case with \(\theta = 1.25\pi\) and (b) an anisotropic case with \(\theta = 1.05\pi\). Both have \(\beta = 3\). The angle \(1.25\pi\) is in the isotropic region with both second order derivative terms equally unstable ($\cos\theta=\sin\theta\approx-.71$). When \(\theta = 1.05\pi\), there is still instability in both second order terms, but since $(\cos\theta,\sin\theta)\approx (-.99,-1.6)$ the instability in the \(y\)-direction is much weaker than that in the \(x\)-direction leading to the phase chaotic cells becoming elongated along the $y$-direction. If this direction would be made sufficiently stable, we would see phase-chaotic ripples instead of stretched cells.}
  \label{fig:CGLE_codim2_soln_examples}
\end{figure}

\begin{figure}[!htbp] 
  \includegraphics[width=.24\textwidth]{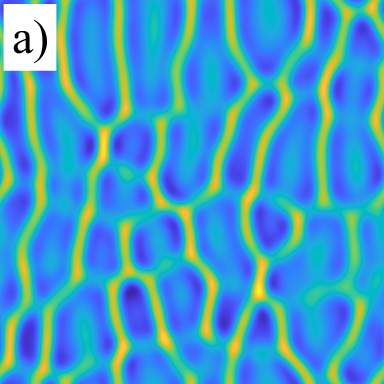} \hfill
  \includegraphics[width=.24\textwidth]{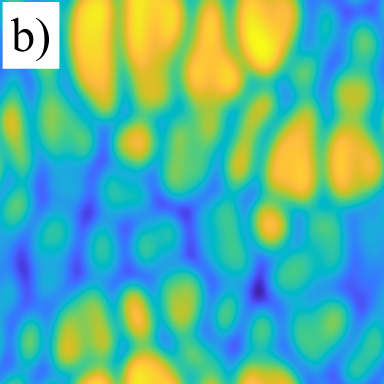} \hfill
  \includegraphics[width=.24\textwidth]{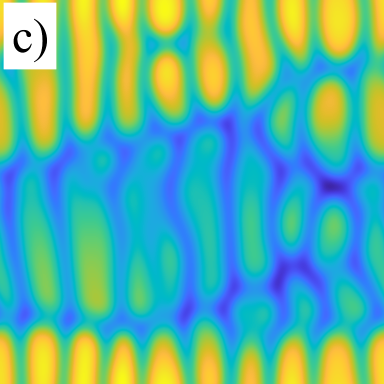} \hfill
  \includegraphics[width=.24\textwidth]{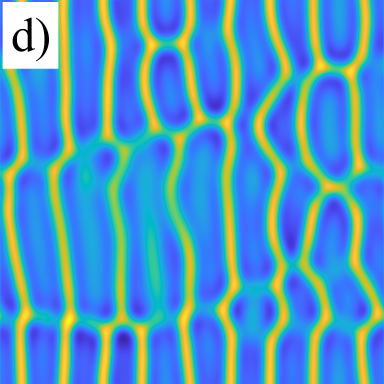}
  \caption[Comparison of ACGLE and Codimension Two Phase Equation]{A solution $A(x,y,t)$ to the ACGLE with parameters \(\beta=2\) and \(\theta = 1.05\pi\), \(\rho = 1\), \(\delta = 0.3\) corresponding to $\alpha_1=(\delta^2\rho\cos\theta-1)/\beta \approx -0.5444$, $\alpha_2 = (\delta^2\rho\sin\theta-1)/\beta \approx -0.5070$ was simulated on a square domain \([-100,100]\times[-100,100]\) to time \(t = 40/\delta^4 \approx 5000\). (a) Absolute value \(|A|\) and (b) phase \texttt{angle(A)}. (c) The solution \(\Phi\) of the phase equation (\ref{eq:2D_aCGL_phase_codim2}) using the same parameters at time \(T=40\). The solution \(\Phi\) displays similar behavior as the phase of the ACGLE, albeit not exact due to the size of \(\delta\). (d) $|A|$ for $A$ approximated from \(\Phi\) using Equations (\ref{eq:aCGL_ansatz_longwave}), (\ref{eq:aCGL_phase_scalings_phi}) and (\ref{eq:Wapprox_func_of_Phi}). The solution to \(\Phi\) exhibits fewer pinches than the solution to \(A\) at the given times, however the snapshots in (b,c) become closer to (a,b) if run to longer time. Similarly, earlier times of (a,b) more closely resemble (b,c). The effective strength of the nonlinearity, dependent on \(\beta\), and which causes the pinches appear, is not the same for the given parameters.}
  \label{fig:ACGLE_AvsPhi_Anisotropic_codim2}
\end{figure}

\section{Conclusions}
\label{sec:conclusions}

We have derived and studied Kuruamoto-Sivashinsky (KS)-type phase equations that govern the dynamics of the anisotropic complex Ginzburg-Landau equation near Benjamin-Feir-Newell (BFN) instabilities. While in the isotropic case there is just one type of BFN instability determined by $1+\alpha\beta=0$ ($\alpha=\alpha_1=\alpha_2$), the anisotropy induces two instability surfaces defined by $1+\alpha_1\beta=0$ and $1+\alpha_2\beta=0$, respectively. 

If only one of these equations is satisfied, an instability (or bifurcation) of codimension one is encountered. In this case the resulting phase equation governing the Ginzburg-Landau dynamics near such an instability is the 1D KS equation with respect to one of the two directions, with an additional diffusion term in the other direction (equation (\ref{eq:2D_aCGL_phase_codim1}) with its rescaled version (\ref{eq:2D_aCGL_phase_codim1_rescaled})). 

If both equations above are satisfied, an instability of codimension two occurs and at the instbility the anisotropic Ginzburg-Landau equation degenerates to its isotropic version. In this case the resulting phase equation is the anisotropic, two-dimensional KS equation (\ref{eq:2D_aCGL_phase_codim2}) with its rescaled version (\ref{eq:2D_aCGL_phase_codim2_rescaled}), whose anisotropy is revealed in the second order derivative terms and quantified by the angle $\theta$ while the other terms are isotropic. 

Attempts to generate solutions of the Ginzburg-Landau equation from solutions of these phase equations were successful in both cases. A paper on a systematic parameter study of the solutions of the anisotropic complex Ginzburg-Landau equation is in preparation, including parameters away from the BNF-instability surfaces.

From a general pattern formation point of view, the anisotropic complex Ginzburg-Landau equation is the generic amplitude equation for oscillatory (Hopf) instabilities with zero wavenumbers in anisotropic extended systems with reflection symmetries in both directions, \textit{i.e.} the basic state of the system becomes neutrally stable with respect to bulk oscillations. There are three types of instabilities with nonzero wavenumbers (one stationary and two oscillatory; see \cite{dangelmayr2008modulational}) for which a system of two or four coupled Ginzburg-Landau equations becomes the generic system of amplitude equations. For these coupled Ginzburg-Landau equations, BNF-type instabilities as well as Eckhaus-type instabilities result in coupled phase equations. Specifically, if the instability of the basic state is oscillatory (Hopf-type), the Ginzburg-Landau system contains global coupling terms \cite{dangelmayr2008modulational} leading to global coupling terms in the resulting coupled phase equations. These coupled phase equations are the subject of current studies.

All simulations were computed with the authors' own codes based on \cite{kassam2005fourth}, which can be found at \url{https://github.com/drhandwerk/ACGLE-Phase-Equations}.

\section*{Acknowledgements} This work was supported at Colorado State University by NSF grant DMS-1615909 to I. Oprea, G. Dangelmayr, and P. D. Shipman.

\bibliographystyle{siam}          
\bibliography{ACGLE_Phase_bib}

\end{document}